\newcommand{\etal}{{\emph{et~al.}}}
\newcommand{\barA}{{\bar A}}
\newcommand{\barP}{{\bar P}}     
\newcommand{\barX}{{\bar X}}
\newcommand{\bfP}{{\bf{P}}}
\newcommand{\calA}{{\cal A}}
\newcommand{\calP}{{\cal P}}
\newcommand{\calX}{{\cal X}}
\newcommand{\reals}{{\mathbb R}}
\newcommand{\razy}{\,} 
\newcommand{\intermA}{\ddot{A}} 
\newcommand{\AlgDrift}{{\sc{Drift}}\xspace} 
\newcommand{\AlgExtDrift}{{\sc{ExtDrift}}\xspace} 
\newcommand{\ALG}{\ensuremath{\calA}\xspace}
\newcommand{\OPT}{\textsc{Opt}\xspace}
\DeclareMathOperator{\pythagoras}{\vartriangle}
\newcommand{\ltwodist}[2]{\ensuremath{\pythagoras(#1,\,#2)}\,}
\newcommand{\proj}[1]{\ensuremath{#1^x}}
\newcommand{\distsymb}{{\ell}}
\newcommand{\distsub}[2]{{\ensuremath{\distsymb_{#1#2}}}}
\newcommand{\distpar}[2]{{\ensuremath{\distsymb(#1,#2)}}}
\newcommand{\half}{{\textstyle\frac{1}{2}}}
\newcommand{\threehalves}{{\textstyle\frac{3}{2}}}
\newcommand{\sqrttwo}{{\sqrt{2}}}
\newcommand{\sqrtthree}{{\sqrt{3}}}
\title{Better Bounds for Online Line Chasing}
\titlerunning{Better Bounds for Online Line Chasing}
\author{Marcin Bienkowski}{Institute of Computer Science, University of Wrocław, Poland}{marcin.bienkowski@cs.uni.wroc.pl}{https://orcid.org/0000-0002-2453-7772}{}
\author{Jarosław Byrka}{Institute of Computer Science, University of Wrocław, Poland}{jaroslaw.byrka@cs.uni.wroc.pl}{https://orcid.org/0000-0002-3387-0913}{}
\author{Marek Chrobak}{University of California at Riverside, CA, USA}{marek@cs.ucr.edu}{https://orcid.org/0000-0002-8673-2709}{}
\author{Christian Coester}{University of Oxford, United Kingdom}{christian.coester@cs.ox.ac.uk}{https://orcid.org/0000-0003-3744-0977}{}
\author{Łukasz Jeż}{Institute of Computer Science, University of Wrocław, Poland}{lukasz.jez@cs.uni.wroc.pl}{https://orcid.org/0000-0002-7375-0641}{} 
\author{Elias Koutsoupias}{University of Oxford, United Kingdom}{elias@cs.ox.ac.uk}{https://orcid.org/0000-0002-2226-6737}{}
\authorrunning{M.~Bienkowski, J.~Byrka, M.~Chrobak, C.~Coester, Ł.~Jeż, E.~Koutsoupias}
\keywords{convex body chasing, line chasing, competitive analysis}
\begin{document}

\maketitle   

\begin{abstract}
We study online competitive algorithms for the \emph{line chasing problem} in
Euclidean spaces $\reals^d$, where the input consists of an initial point
$P_0$ and a sequence of lines $X_1,X_2,...,X_m$, revealed one at a~time. At
each step $t$, when the line $X_t$ is revealed, the algorithm must determine a
point $P_t\in X_t$. An~online algorithm is called $c$-competitive if for any
input sequence the path $P_0, P_1,...,P_m$ it computes has length at most $c$
times the optimum path. The line chasing problem is a variant of a~more
general convex body chasing problem, where the sets $X_t$ are arbitrary convex
sets.

To date, the best competitive ratio for the line chasing problem was $28.1$,
even in the plane. We improve this bound by providing
a simple~$3$-competitive algorithm for any dimension~$d$. We complement this bound by a matching lower bound for algorithms that are memoryless in the sense of our algorithm, and a lower
bound of $1.5358$ for arbitrary algorithms. The latter bound also improves upon the previous lower bound of $\sqrt{2}\approx 1.412$ for convex body chasing in $2$ dimensions.
\end{abstract}

%%%%%%%%%%%%%%%%%%%%%%%%%%%%%%%%%%%%%%%%%%%%%%%%%%%%%%%%%%%%%%%%%%%%%%%%%%%%%%
%%%%%%%%%%%%%%%%%%%%%%%%%%%%%%%%%%%%%%%%%%%%%%%%%%%%%%%%%%%%%%%%%%%%%%%%%%%%%%

\section{Introduction}
\label{sec: introduction}

\emph{Convex body chasing} is a fundamental problem in online
computation. It asks for an incrementally-computed path that traverses a given
sequence of convex sets provided one at a time in an online fashion and is as short as possible.
Formally, the input consists of an initial point $P_0 \in \reals^d$ and
a sequence $X_1, X_2, ..., X_m \subseteq \reals^d$ of convex sets.
The objective is to find a~path $\bfP = (P_0,P_1,...,P_m)$ with $P_t\in X_t$ for each
$t=1,2,...,m$ and minimum total length $\distsymb(\bfP) = \sum_{t=1}^m \distsub{P_{t-1}}{P_t}$.
(Throughout the paper, by $\distpar{P}{Q}$ or 
$\distsub{P}{Q}$ we denote the Euclidean distance between points $P$ and $Q$ in $\reals^d$.) 
This path $\bfP$ must be computed \emph{online}, in the following sense: the sets
$X_t$ are revealed over time, one per time step. At step~$t$, when set $X_t$ is
revealed, we need to immediately and irrevocably identify its visit point $P_t \in X_t$.
Thus the choice of $P_t$ does not depend on the future sets $X_{t+1},...,X_m$.

As can be easily seen, in this online scenario computing an optimal solution is not possible,
and thus all we can hope for is to find a path whose length only approximates the optimum value.
A widely accepted measure for the quality of this approximation is the competitive ratio.
For a constant $c\ge 1$,
we will say that an online algorithm $\calA$ is \emph{$c$-competitive} if it computes
a path whose length is at most $c$ times the optimum solution (computed offline).
This constant $c$ is called the \emph{competitive ratio} of $\calA$.
Our objective is then to design an online algorithm whose competitive ratio is as close to $1$ as possible.

The convex body chasing problem was originally introduced in 1993 by
Friedman and Linial~\cite{Friedman_convex_body_1993}, who gave a constant-competitive
algorithm for chasing convex bodies in $\reals^2$ (the plane) and conjectured that it is possible
to achieve constant competitiveness in any $d$-dimensional space $\reals^d$. 
As shown in~\cite{Friedman_convex_body_1993},
this constant would have to depend on $d$; in fact it needs to be at least $\sqrt{d}$.   

The Friedman-Linial conjecture has remained open for over two decades. In the last several years 
this topic has experienced a sudden increase in research activity, 
partly motivated by connections to machine learning (see \cite{Bansal_nested_covex_2018,Bubeck_competitively_chasing_2018}),
resulting in rapid progress.
In 2016, Antoniadis~{\etal}~\cite{Antoniadis_chasing_convex_2016}
gave a $2^{O(d)}$-competitive algorithm for chasing affine spaces of any dimension.
In 2018, Bansal~{\etal}~\cite{Bansal_nested_covex_2018}
gave an~algorithm with competitive ratio $2^{O(d \log d)}$ for nested families of convex sets, where the
input set sequence satisfies $X_1\supseteq X_2 \supseteq ... \supseteq X_m$. Soon later
their bound was improved to $O(d\log d)$ by Argue~{\etal}~\cite{Argue_nearly-linear_2018},
and then to $O(\sqrt{d\log d})$ by Bubeck~{\etal}~\cite{Bubeck_chasing_nested_2018}.
Finally, Bubeck~{\etal}~\cite{Bubeck_competitively_chasing_2018} just recently announced a proof of the Friedman-Linial conjecture, providing an algorithm with competitive ratio~$2^{O(d)}$ for arbitrary convex sets.

One other natural variant of convex body chasing that also attracted attention in the literature is
\emph{line chasing}, where all sets $X_t$ are lines.  
Friedman and Linial~\cite{Friedman_convex_body_1993} gave an online algorithm for line chasing
in $\reals^2$ with ratio $28.53$.
Their algorithm was simplified by Antoniadis~{\etal}~\cite{Antoniadis_chasing_convex_2016},
who also slightly improved the ratio, to $28.1$. Earlier, in 2014,
Sitters~\cite{Sitters_generalized_work_function_2014} showed that a generalized
work function algorithm has constant competitive ratio for line chasing, but he did not 
determine the value of the constant.

%%%%%%%%%%%%%

\subsection{Our results}

We study the line chasing problem discussed above. We give
a $3$-competitive algorithm for line chasing in $\reals^d$, for any dimension $d\ge 2$,
significantly improving the competitive ratios from~\cite{Friedman_convex_body_1993,Antoniadis_chasing_convex_2016,Sitters_generalized_work_function_2014}.
Our algorithm is very simple and essentially memoryless, as it only needs to keep track of the
last line in the request sequence. We start by providing the algorithm for line chasing in the plane, in 
\cref{sec: 3-competitive algorithm in the plane},
and later in \cref{sec: an algorithm for arbitrary dimension} we extend it to an arbitrary dimension. In Section~\ref{sec:memoryless}, we provide a matching lower bound of $3$ for algorithms that are memoryless in the sense stated above and oblivious with respect to rotation, translation and uniform scaling of the metric space.
We also provide a lower bound for arbitrary algorithms (see \cref{sec: lower bound}), showing that no online algorithm
can achieve competitive ratio better than $1.5358$. 
This improves the lower bound of $\sqrt{2}\approx 1.412$ for line chasing established in~\cite{Friedman_convex_body_1993}, which was previously also the best known lower bound for the more general problem of convex body chasing in the plane.

%%%%%%%%%%%%%

\subsection{Other related work} 

Set chasing problems are also known as \emph{Metrical Service Systems} (see below) and belong to a very general class of problems for online optimization and competitive analysis called 
\emph{Metrical Task Systems (MTS)}~\cite{Borodin_optimal_online_1992}.
An instance of MTS specifies a metric space $M$, an initial point $P_0 \in M$, and 
a~sequence of non-negative functions $\tau_1,\tau_2,...,\tau_m$ over $M$ called \emph{tasks}.
These tasks arrive online, one at a time. At each step $t$,
the algorithm needs to choose a point $P_t \in M$ where it moves to ``process'' the current task $\tau_t$.
The goal is to minimize the total cost defined by
$\sum_{t=1}^m ( \mu(P_{t-1},P_t) + \tau_t(P_t) )$, where $\mu()$  is the metric in $M$.
Thus in MTS, in                                        
addition to movement cost, at each step we also pay the cost of ``processing''~$\tau_t$. 
For any metric space $M$ with $n$ points,
if we allow arbitrary non-negative task functions then
a competitive ratio of $2n-1$ can be achieved and is optimal. 
This general bound is not particularly useful, because in many online optimization problems that can be modeled 
as an MTS, the metric space $M$ has additional structure and only tasks of some special form are allowed, which
makes it possible to design online algorithms with constant competitive ratios, independent of the size of $M$.

An MTS where $M = \reals^d$ and all functions $\tau_t$ are convex
is referred to as \emph{convex function chasing}, and was studied 
in~\cite{Antoniadis_chasing_convex_2016,Bansal_online_convex_2015,Lin_dynamic_right-sizing_2011}.
For the special case of convex functions on the real line, a $2$-competitive algorithm was given in \cite{Bansal_online_convex_2015}.

An MTS where each task function $\tau_t$ takes value $0$ on a subset $X_t\subseteq M$ and $\infty$ elsewhere is called a \emph{Metrical Service System (MSS)}~\cite{Chrobak_metrical_task_systems_1996}. In other words, in an MSS, in each step $t$ the algorithm needs to move to a point in $X_t$. To achieve a competitive ratio independent of the size of $M$, it is generally required to restrict the sets $X_t$ to be in some subset $\calX\subsetneq\calP(M)$. For instance, finite competitive ratios can be achieved when $\calX$ is the set of sets of size at most $k$ \cite{FiatFKRRV98,Burley96,Ramesh95}. If $M=\reals^d$ and $\calX$ is the set of convex subsets, this is precisely the convex body chasing problem, and if $\calX$ is the set of lines, it is the line chasing problem. One variant of MSS that has been particularly well studied is
the famous \emph{$k$-server problem} (see, for example, \cite{Manasse_competitive_algorithms_1990,Koutsoupias_k-server_1995}),
in which one needs to schedule movement of                          
$k$ servers in response to requests arriving online in a metric space, where
each request must be covered by one server. 
(In the MSS representation of the $k$-server problem, 
each set $X_t$ consists of all $k$-tuples of points that include the request point at step~$t$.)

%%%%%%%%%%%%%%%%%%%%%%%%%%%%%%%%%%%%%%%%%%%%%%%%%%%%%%%%%%%%%%%%%%%%%%%%%%%%%%
%%%%%%%%%%%%%%%%%%%%%%%%%%%%%%%%%%%%%%%%%%%%%%%%%%%%%%%%%%%%%%%%%%%%%%%%%%%%%%

\section{A 3-Competitive Algorithm in the Plane}
\label{sec: 3-competitive algorithm in the plane}

\begin{figure}[t]
\begin{center} 
\includegraphics[width = 2.5in]{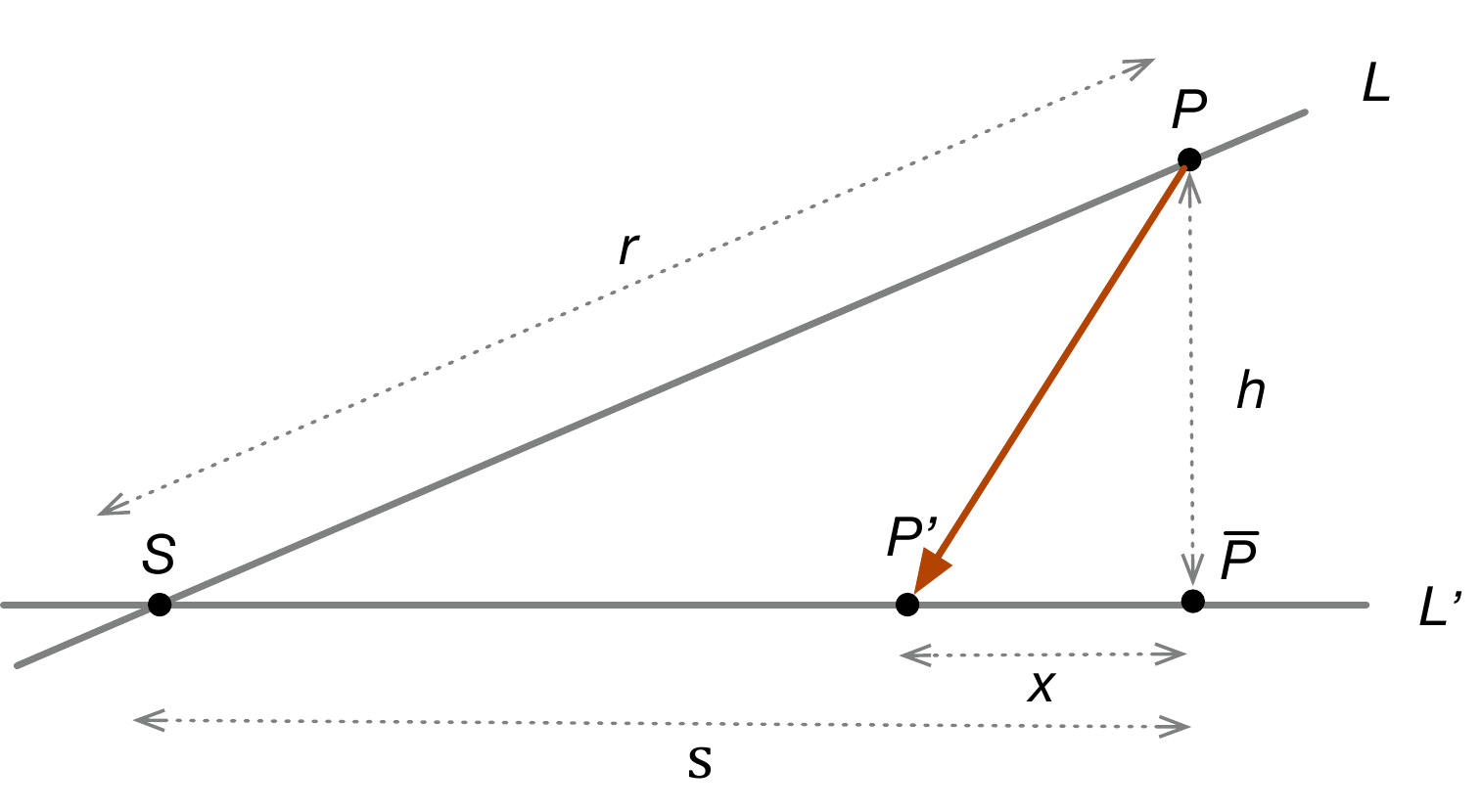}
\caption{Algorithm~{\AlgDrift} moves from $P$ to $P'$.}
\label{fig: algorithm drift}
\end{center}
\end{figure}
	
In this section, we present our online algorithm for line chasing in $\reals^2$ with
competitive ratio~$3$. The intuition is this: suppose that the last requested line is~$L$ and that
the algorithm moved to point $P\in L$. Let $L'$ be the new request line, $S$ the intersection
point of $L$ and $L'$, and $r = \distsub{S}{P}$. A na{\"i}ve greedy algorithm would move to the point $\barP$ on $L'$ 
nearest to $P$ (see \cref{fig: algorithm drift}) at cost $h = \distsub{P}{\barP}$.
If $h$ is small, then $r- \distsub{S}{\barP} = o(h)$, that is the distance between the
greedy algorithm's point and $S$ decreases only by a negligible amount.
But the adversary can move to $S$, paying cost $r$, and then
alternate requests on $L$ and $L'$. On this sequence the overall cost of this algorithm
would be $\omega(r)$, so it would not be constant-competitive. This example shows that if the angle between $L$ and $L'$
is small then the drift distance towards $S$ needs to be roughly proportional to $h$. 
Our algorithm is designed so that this distance is roughly $h/\sqrt{2}$ if $h$ is small (with the coefficient
chosen to optimize the competitive ratio), and that it becomes $0$ when $L'$ is perpendicular to $L$.

%%%%%%

\begin{center}
\begin{minipage}{0.95\textwidth}
\hrule
\smallskip
\textbf{Algorithm~\AlgDrift}
Suppose that the last request is line $L$ and that the algorithm is on point $P\in L$.
Let the new request be $L'$ and for any point $X\in L$, let $\barX$ be the orthogonal
projection of $X$ onto $L'$.  If $L'$ does not intersect $L$, move to $P'=\barP$.
Otherwise, let $S = L\cap L'$ be the intersection point of $L$ and $L'$.
Let also $r = \distsub{S}{P}$, $h = \distsub{P}{\barP}$, and $s = \distsub{S}{\barP}$
(see \cref{fig: algorithm drift}).
Move to point $P'\in L'$ such that $\distsub{S}{P'} = s - x$,
where $x \;=\; \frac{1}{\sqrttwo} (h+s-r)$. 
\smallskip  
\hrule       
\end{minipage}
\end{center}

%%%%%%

\begin{theorem}
Algorithm \AlgDrift is $3$-competitive for the line chasing problem in~$\reals^2$.
\end{theorem}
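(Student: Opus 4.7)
I propose to use an amortized analysis with a potential of the form
\[
\Phi \;=\; c\cdot\distsub{P}{Q},
\]
where $P$ is the current position of \AlgDrift, $Q$ is the current position of an arbitrary offline trajectory, and both lie on the most recently requested line at the end of each step. The constant $c$ will be pinned down to $\sqrtthree$ by a balance of two extremal constraints below. Since $\Phi_0=0$ and $\Phi\ge 0$ throughout, it suffices to establish, at every step, the per-step amortized inequality
\[
\distsub{P}{P'} \;+\; c\bigl(\distsub{P'}{Q'}-\distsub{P}{Q}\bigr) \;\le\; 3\distsub{Q}{Q'}
\]
for every valid $Q\in L$ and $Q'\in L'$, with $P'$ the point chosen by the algorithm; summing the inequality telescopes to the claimed bound of $3$ on the total cost of \AlgDrift.

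I would verify this inequality by placing $S$ at the origin with $L'$ along the $x$-axis, so that $\barP=(s,0)$, $P=(s,h)$, and $P'=(s-x,0)$ with $x=(h+s-r)/\sqrttwo$; then $\distsub{P}{P'}=\sqrt{h^2+x^2}$. Parametrizing $Q=(s\rho/r,\,h\rho/r)$ along $L$ and $Q'=(\rho',0)$ along $L'$ makes every quantity an explicit function of $r,\theta,\rho,\rho'$. The degenerate subcases (parallel lines, where $x=0$; and perpendicular lines, where $s=0=x$) follow immediately from the triangle inequality. For generic $\theta\in(0,\pi/2)$, within each sign regime of $\rho'-(s-x)$ the left-hand side minus the right-hand side is concave in $\rho'$, so its maximum is attained either at a critical point of the form $\rho'-s\rho/r=\pm(h\rho/r)/\sqrttwo$ (which exists precisely because $c<3$) or at the kink $\rho'=s-x$; a symmetric reduction pins $\rho$, leaving a finite algebraic check.

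The main obstacle, and the source of both $c=\sqrtthree$ and the drift coefficient $1/\sqrttwo$ in the algorithm, is the small-angle regime $\theta\to 0$. There two extremal configurations simultaneously become tight: $Q=Q'=S$ forces $\sqrt{h^2+x^2}\le c(r-s+x)$, while $Q=P$ with worst $Q'=(s+h/\sqrttwo,\,0)$ forces $\sqrt{h^2+x^2}+cx\le h\sqrt{9-c^2}$. Using the limits $x/h\to 1/\sqrttwo$ and $\sqrt{h^2+x^2}/h\to\sqrt{3/2}$ as $\theta\to 0$, the first constraint yields $c\ge\sqrtthree$ and the second yields $c\le\sqrtthree$; both hold with equality iff $c=\sqrtthree$ and the drift satisfies $x/h\to 1/\sqrttwo$, which is exactly what \AlgDrift{} provides. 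Verifying the intermediate configurations then follows from the concavity and boundary analysis above; the two tight extremes uniquely pin down the algorithm's drift coefficient, explaining both the value of the constant and why the ratio matches the memoryless lower bound of $3$ advertised in the introduction.
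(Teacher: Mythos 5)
Your plan is essentially the paper's own proof: the same potential $\Phi=\sqrtthree\razy\distsub{P}{A}$, the same amortized inequality, the same reduction of the adversary's new position $A'$ to either the critical point at distance $g/\sqrttwo$ from $\barA$ or the kink at $P'$, and the same two binding constraints $x^2+h^2\le 3(r-s+x)^2$ and $x^2+h^2\le 3(\sqrttwo h-x)^2$, whose deferred ``finite algebraic check'' is exactly the paper's Cases 1--4 and does go through for all angles (e.g.\ $x^2+h^2\le\threehalves h^2\le 3(r-s+x)^2$ using $r-h\le s\le r$). One minor correction: the parallel and perpendicular subcases are not immediate from the triangle inequality alone --- they need the Cauchy--Schwarz/power-mean step $h+\sqrtthree\razy z\le\sqrttwo\sqrt{h^2+3z^2}\le\sqrt{6}\razy\distsub{A}{A'}$ --- but that is a one-line fix.
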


\begin{proof}
We establish an upper bound on the competitive ratio via amortized analysis, based on
a potential function. The (always non-negative) value of this potential function, $\Phi(P,A)$, depends on
locations $P, A \in L$ of the algorithm's and the adversary's point on the current
line $L$. If $L'$ is the new request line, and $P',A'\in L'$ are
the new locations of the algorithm's and adversary's points, we want this function to satisfy
\begin{eqnarray}
	\distsub{P}{P'} + \Phi(P',A') - \Phi(P,A) \;\le\;  3 \razy \distsub{A}{A'}. 
	\label{eqn: potential function property}
\end{eqnarray}
Since initially the potential is $0$ and is always non-negative, adding inequality~\eqref{eqn: potential function property}
for all moves will establish $3$-competitiveness of Algorithm~{\AlgDrift}.  

The potential function we use in our proof is $\Phi(P,A) = \sqrtthree \razy \distsub{A}{P}$.
Substituting this formula, inequality~\eqref{eqn: potential function property} reduces to 
\begin{equation}
	\distsub{P}{P'} + \sqrtthree \razy (\, \distsub{A'}{P'} - \distsub{A}{P} \, ) \;\le\; 3 \razy \distsub{A}{A'}.
	\label{eqn: amortized bound}
\end{equation}
It thus remains to prove inequality~\eqref{eqn: amortized bound}.
Let $g = \distsub{A}{\barA}$, $z = \distsub{A'}{\barA}$, and $v = \distsub{\barA}{\barP}$.

We first discuss the trivial case of non-intersecting $L$ and $L'$.
Keeping with the general notation, here we have $x=0$ and thus $\distsub{P}{P'}=h$.
Moreover, $g=\distsub{A}{\barA}=h$ as well.  For fixed $z$, we have $\distsub{A}{A'} = \sqrt{h^2+z^2}$,
i.e., the right hand side of \eqref{eqn: amortized bound} is fixed,
whereas the left hand side is maximized if $A'$ is on the other side of $\barA$ than~$\barP$.
The left hand side is thus at most
\begin{equation*}
	h + \sqrt{3} \razy z \leq \sqrt{2} \razy \sqrt{h^2+3z^2} 
		\;\leq\; \sqrt{2} \razy \sqrt{3\razy (h^2+z^2)} = \sqrt{6} \razy \distsub{A}{A'} < 3 \razy \distsub{A}{A'},
\end{equation*}
where the first inequality follows from the power mean inequality (for powers $1$ and $2$),
proving this easy case.

\begin{figure}[t]
\centering
\includegraphics[width = 3in]{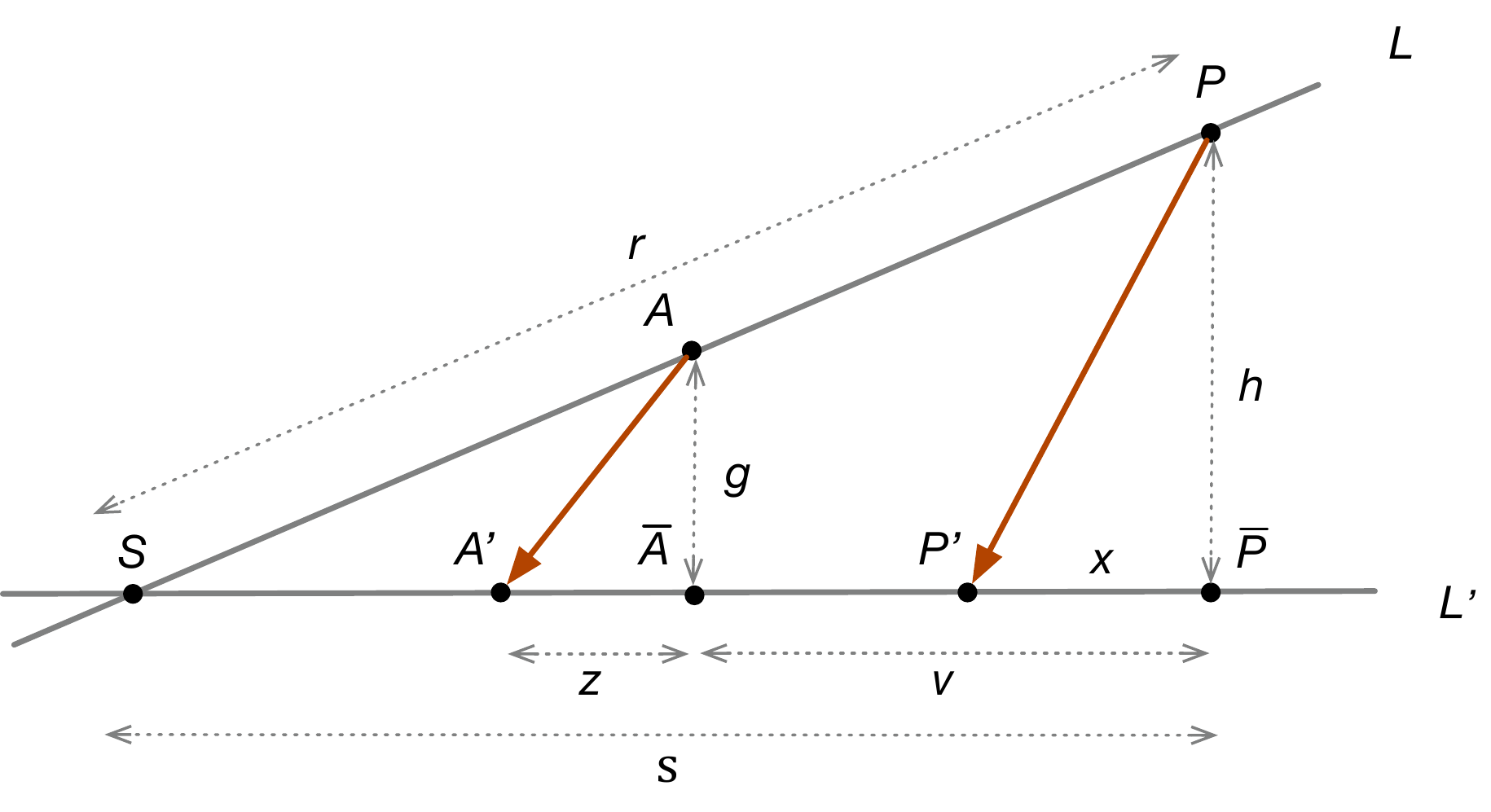}
\caption{Notation for the analysis of Algorithm~{\AlgDrift}.}
\label{fig: upper_bound_3_notation}
\end{figure}

The situation when $L'$ and $L$ do intersect is illustrated in \cref{fig: upper_bound_3_notation}. 
(The figure shows only the case when $\barA$ is between $S$ and $P'$.)
Orient $L'$ from left to right (with $\barP$ being to the right of $S$), as shown in this figure. 
We want to express the distances in the above inequality in terms
of $s$, $h$, $v$, and $z$ (keeping in mind that $x$ and $r$ are functions of $h$ and $s$):
\begin{align*} 
\distsub{P}{P'} \;&=\; \sqrt{x^2+h^2}
\\
\distsub{A}{P}   \;&=\; vr/s \;=\; (v \sqrt{s^2+h^2})/s
\\
\distsub{A}{A'}   \;&=\; \sqrt{z^2+g^2} 
\end{align*}
The values of $g$ and $\distsub{A'}{P'}$ depend on some cases, that we consider below.

%%%%%%%%%%

\begin{description}

\item[Case 1.]
$\barA$ is between $S$ and $P'$,
as in \cref{fig: upper_bound_3_notation}. 
Then $g = h(s-v)/s$.
Our goal is first to find $A'$ for which the bound in~\eqref{eqn: amortized bound} is tightest.
For a given $z$,  
among the two locations of $A'$ at distance $z$ from $\barA$, the one on the
left gives a larger value of the left-hand side of~\eqref{eqn: amortized bound}, while 
the right-hand side is the same for both.
Thus we can assume that $A'$ is to the left of $\barA$, so $\distsub{A'}{P'} = z+v-x$.
Then we can rewrite~\eqref{eqn: amortized bound} as follows:
\begin{equation}   
	\textstyle
   \frac{1}{\sqrtthree} \razy  \distsub{P}{P'} - \distsub{A}{P} + v - x
	 		\;\le \sqrtthree \razy \sqrt{z^2+g^2} -  z 
	\label{eqn: amortized bound case 1}
\end{equation}
By elementary calculus, the right-hand side is minimized for $z = \frac{1}{\sqrttwo}\razy g$,
so we can assume that $z$ has this value.  Then inequality~\eqref{eqn: amortized bound case 1} reduces
to
\begin{equation}   
	\textstyle
   \frac{1}{\sqrtthree} \razy  \distsub{P}{P'} - \distsub{A}{P} + v - x
	 		\;\le  \sqrttwo\razy g.
	\label{eqn: amortized bound case 1 reduced}
\end{equation}
After substituting $g = h(s-v)/s$ and  $\distsub{A}{P} = vr/s$, 
inequality~\eqref{eqn: amortized bound case 1 reduced} reduces
further to
\begin{equation}   
	\textstyle
   s\razy ( \frac{1}{\sqrtthree} \razy \distsub{P}{P'} -x - \sqrttwo h )
	 		\;\le v \razy ( r - s - \sqrttwo\razy h).
	\label{eqn: amortized bound case 1 reduced more}
\end{equation}
The expression in the parenthesis on    
the right-hand side of~\eqref{eqn: amortized bound case 1 reduced more} is non-positive
by triangle inequality,
so the right-hand side is minimized when $v$ is maximized, that is $v = s$,
and then it reduces to 
\begin{equation}   
	\textstyle
   x^2+h^2  \;\le 3 ( r - s + x )^2.
	\label{eqn: amortized bound case 1 reduced even more}
\end{equation}
Recall that $x =  \frac{1}{\sqrttwo} (h+ s- r )$. Since $r-h\le s \le r$, we have
\begin{align*}   
	\textstyle
   x^2+h^2   \;&=\;   \half (h+ s- r )^2 + h^2
			\\
			\;&\le\;  \half h^2 + h^2
 			\\
	 \textstyle   	\;&=\; \threehalves h^2
 					\;\le\; \threehalves [\, h + (\sqrt{2}-1) (r- s) \,]^2
 						\;=\; 3( r - s + x )^2,
\end{align*}
proving~\eqref{eqn: amortized bound case 1 reduced even more}.

%%%%%%%%%%

\item[Case 2.]
$\barA$ is before $S$. In this case we have $g = h(v-s)/s$.
Just as in Case~1, we can assume that  $A'$ is to the
left of $\barA$, so that $\distsub{A'}{P'} = z+v-x$, 
and~\eqref{eqn: amortized bound} reduces to
\begin{equation}   
	\textstyle
   \frac{1}{\sqrtthree} \razy  \distsub{P}{P'} - \distsub{A}{P} + v - x
	 		\;\le  \sqrttwo\razy g.
	\label{eqn: amortized bound case 2 reduced}
\end{equation}
After substituting $g = h(v-s)/s$ and  $\distsub{A}{P} = vr/s$, 
inequality~\eqref{eqn: amortized bound case 1 reduced} reduces
further to
\begin{equation}   
	\textstyle
   s\razy ( \frac{1}{\sqrtthree} \razy \distsub{P}{P'} -x + \sqrttwo h )
	 		\;\le v \razy ( r - s + \sqrttwo\razy h).
	\label{eqn: amortized bound case 2 reduced more}
\end{equation}
The expression in the parenthesis on the right-hand side of \eqref{eqn: amortized bound case 2 reduced more}
is non-negative, so the right-hand side is minimized when $v = s$ (because
in this case $v\ge s$), so \eqref{eqn: amortized bound case 2 reduced more} reduces to
the same inequality~\eqref{eqn: amortized bound case 1 reduced even more} as in
Case~1, completing the argument for Case~2.

%%%%%%%%%%

\item[Case 3.]
$\barA$ is after $\barP$. In this case we have $g = h(v+s)/s$. 
Symmetrically to Case~1, we can now assume that $A'$ is to the right of $\barA$, so that
$\distsub{A'}{P'} = z+v+x$, and that $z = \frac{1}{\sqrttwo}\razy g$.
Then, analogously to~\eqref{eqn: amortized bound case 1 reduced},
 we can rewrite~\eqref{eqn: amortized bound} as follows:
\begin{equation}   
	\textstyle
   \frac{1}{\sqrtthree} \razy  \distsub{P}{P'} - \distsub{A}{P} + v + x
	 		\;\le  \sqrttwo\razy g
	\label{eqn: amortized bound case 3 reduced}
\end{equation}
After substituting $g = h(v+s)/s$ and  $\distsub{A}{P} = vr/s$, 
inequality~\eqref{eqn: amortized bound case 3 reduced} reduces
further to
\begin{equation}   
	\textstyle
   s\razy ( \frac{1}{\sqrtthree} \razy \distsub{P}{P'} + x - \sqrttwo h )
	 		\;\le v \razy ( r - s + \sqrttwo\razy h).
	\label{eqn: amortized bound case 3 reduced more}
\end{equation}
The expression in the parenthesis on    
the right-hand side of~\eqref{eqn: amortized bound case 3 reduced more} is non-negative,
so the right-hand side is minimized when $v=0$, and then it reduces to 
\begin{equation}   
	\textstyle
   x^2+h^2  \;\le 3 ( \sqrttwo h - x )^2.
	\label{eqn: amortized bound case 3 reduced even more}
\end{equation}
To prove this, we proceed similarly as in Case~1:
\begin{equation*}   
	\textstyle
   x^2+h^2   \;\le\; \threehalves h^2
 	\;\le\; \threehalves ( h + r- s)^2
 					\;=\; 3 ( \sqrttwo h - x )^2,
\end{equation*}
proving~\eqref{eqn: amortized bound case 3 reduced even more}.

%%%%%%%%%%

\item[Case 4.]
$\barA$ is between $P'$ and $\barP$. Then $g = h(s-v)/s$ (as in Case~1).
Similar to Case~3, we can assume that $A'$ is to the right of $\barA$, so that now
$\distsub{A'}{P'} = z-v+x$, and that $z = \frac{1}{\sqrttwo}\razy g$.   
Then, analogously to~\eqref{eqn: amortized bound case 1 reduced},
 we can rewrite~\eqref{eqn: amortized bound} for this case as follows:
\begin{equation}   
	\textstyle
   \frac{1}{\sqrtthree} \razy  \distsub{P}{P'} - \distsub{A}{P} - v + x
	 		\;\le  \sqrttwo\razy g
	\label{eqn: amortized bound case 4 reduced}
\end{equation}
After substituting $g = h(s-v)/s$ and  $\distsub{A}{P} = vr/s$, 
inequality~\eqref{eqn: amortized bound case 4 reduced} reduces
further to               
\begin{equation}   
	\textstyle
   s\razy ( \frac{1}{\sqrtthree} \razy \distsub{P}{P'} + x - \sqrttwo h )
	 		\;\le v  \razy ( r + s - \sqrttwo\razy h).
	\label{eqn: amortized bound case 4 reduced more}
\end{equation}
We now have two sub-cases. If the expression in the parenthesis on    
the right-hand side of~\eqref{eqn: amortized bound case 4 reduced more} is non-negative
then 
the right-hand side is minimized when $v=0$, so inequality~\eqref{eqn: amortized bound case 4 reduced more}
reduces to inequality~\eqref{eqn: amortized bound case 3 reduced even more} from Case~3.
If this expression is negative (that is when $r+s <\sqrt{2} h$), 
then it is sufficient to prove \eqref{eqn: amortized bound case 4 reduced more} with
$v$ on the right-hand side replaced by $s$ (because $v\le s$). This reduces it to
$\frac{1}{\sqrtthree} \razy \distsub{P}{P'} + x \le r + s$. This last inequality follows
from $\distsub{P}{P'}\le r$ and $x\le s$.          
\qedhere
\end{description}
\end{proof}

%%%%%%%%%%%%%%%%%%%%%%%%%%%%%%%%%%%%%%%%%%%%%%%%%%%%%%%%%%%%%%%%%%%%%%%%%%%%%%
%%%%%%%%%%%%%%%%%%%%%%%%%%%%%%%%%%%%%%%%%%%%%%%%%%%%%%%%%%%%%%%%%%%%%%%%%%%%%%

\section{An Algorithm for Arbitrary Dimension}
\label{sec: an algorithm for arbitrary dimension}

In this section, we show how to extend Algorithm~{\AlgDrift}
to Euclidean spaces $\reals^d$ for arbitrary
dimension $d\ge 2$. This extension, that we call~{\AlgExtDrift},  
is quite simple, and consists of projecting the whole space onto
an appropriately chosen plane that contains the new request line.
While such approach was suggested already by Friedman and Linial~\cite{Friedman_convex_body_1993},
their choice of plane may lose a constant factor in the competitive ratio.
We project onto a different plane, which allows {\AlgExtDrift} to also be $3$-competitive.

Let $P$ be the current {\AlgExtDrift} position and $L'$ the new request line.
If $P \in L'$, {\AlgExtDrift} makes no move.  Otherwise, let $U$ be the uniquely determined plane
which contains both $L'$ and $P$.  {\AlgExtDrift} makes the move prescribed by {\AlgDrift} in the plane $U$
for $P$, $L'$ and the projection of $L$ onto $U$.

\begin{theorem}
Algorithm {\AlgExtDrift} is $3$-competitive for the line chasing problem in $\reals^d$, for arbitrary dimension 
$d \geq 2$.
\end{theorem}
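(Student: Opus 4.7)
The plan is to reduce the $d$-dimensional analysis to the planar case already established for {\AlgDrift}, by exploiting the fact that the algorithm's move lives entirely in the plane $U$ used by {\AlgExtDrift}. We have $P \in U$, $L' \subseteq U$, and the orthogonal projection $L^U$ of the previous line $L$ onto $U$ passes through $P$; hence inside $U$ the behavior of {\AlgExtDrift} is exactly that of {\AlgDrift} run with previous line $L^U$ and new line $L'$. I therefore keep the same potential $\Phi(P,A) = \sqrtthree \razy \distsub{A}{P}$, now evaluated on the true $d$-dimensional positions of the algorithm and the adversary, and aim to verify the per-step inequality
\begin{equation*}
\distsub{P}{P'} + \sqrtthree \razy (\distsub{A'}{P'} - \distsub{A}{P}) \;\le\; 3 \razy \distsub{A}{A'}.
\end{equation*}

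The key step is to bring the adversary into $U$ by orthogonal projection. Let $\tilde A$ denote the orthogonal projection of the adversary's current position $A \in L$ onto $U$. Since this projection is an affine map, $\tilde A$ lies on $L^U$; and since $P \in U$ and $A' \in L' \subseteq U$, the Pythagorean theorem gives $\distsub{A}{P} \ge \distpar{\tilde A}{P}$ and $\distsub{A}{A'} \ge \distpar{\tilde A}{A'}$. Replacing $A$ by $\tilde A$ therefore only weakens the left-hand side and strengthens the right-hand side of the target inequality, so it suffices to prove it with $\tilde A$ in place of $A$.

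But that reduced inequality is precisely the per-step bound of the planar theorem, applied inside $U$ to the starting positions $P, \tilde A \in L^U$ and the ending positions $P', A' \in L'$; invoking the planar theorem closes the step, and summing over all steps (together with $\Phi \ge 0$ and $\Phi = 0$ initially) yields $3$-competitiveness exactly as before. I do not expect any serious obstacle: the two routine things to verify are that orthogonal projection onto $U$ is affine (so that $\tilde A \in L^U$) and that {\AlgExtDrift}'s move in $U$ is by definition {\AlgDrift}'s output on $P, L^U, L'$. The only mild nuisance is the degenerate case in which $L \perp U$, so that $L^U$ collapses to the single point $P$; this can be handled either by continuity, by perturbing $L$, or by observing that then $\tilde A = P$ and the target inequality reduces to a one-line direct calculation.
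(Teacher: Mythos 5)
Your proof is correct and follows essentially the same route as the paper: the paper likewise projects the adversary's position orthogonally onto the plane $U$ (phrased there as granting the adversary a ``free move'' to the projected point), uses the Pythagorean theorem to see that this projection does not increase the distance to any point of $U$ (in particular to $P$ and to $A'$), and then invokes the planar per-step inequality inside $U$. Your handling of the degenerate case $L\perp U$ and your explicit check that the projected adversary lands on the projected line are fine; just note that the substitution makes the per-step inequality \emph{stronger} (larger left side, smaller right side), which is exactly why proving it for the projected point suffices.
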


\begin{proof}
We prove that \eqref{eqn: potential function property} holds in arbitrary dimension.
If $P\in L'$ then $L$ and $L'$ are co-planar, so the analysis from the previous section
works directly. 

So assume that $P\notin L'$.
We first allow the adversary to perform a free move from its current
position $A$ to point $\intermA$ defined as the orthogonal projection of $A$ onto $U$, and then
we analyze the move within $U$ (that is, in a two-dimensional setting), 
as if the adversary started from point $\intermA$.

We note that $\distsub{\intermA}{X} \leq \distsub{A}{X}$ for any point $X \in U$, as
$(\distsub{A}{X})^2 = (\distsub{\intermA}{X})^2 + (\distsub{A}{\intermA})^2$ by definition of $\intermA$.
It follows that:
\begin{itemize}
	\item In the free adversary move from $A$ to $\intermA$ the potential function decreases
	(by taking $X=P$ in the above inequality)  and both costs are $0$.    
	Further, in the move within $U$, with the adversary starting from $\intermA$, 
	Algorithm~{\AlgExtDrift} makes the same move as {\AlgDrift}, which implies 
	that \eqref{eqn: potential function property} is satisfied.
	Thus the complete move (combining the free adversary move and the move
	inside $U$) satisfies inequality~\eqref{eqn: potential function property} as well.  
	
	\item The free move is only beneficial for the adversary: taking $X=A'$ shows that
	the cost of moving to $A'$ from $\intermA$ is no more costly for the adversary
	than moving to $A'$ from $A$.\qedhere
\end{itemize}
\end{proof}

%%%%%%%%%%%%%%%%%%%%%%%%%%%%%%%%%%%%%%%%%%%%%%%%%%%%%%%%%%%%%%%%%%%%%%%%%%%%%%
%%%%%%%%%%%%%%%%%%%%%%%%%%%%%%%%%%%%%%%%%%%%%%%%%%%%%%%%%%%%%%%%%%%%%%%%%%%%%%

\section{Lower Bound for Memoryless Algorithms}
\label{sec:memoryless}

We show that our algorithm achieves the optimal competitive ratio among a certain class of ``memoryless'' algorithms. For a metric space $M$, let $\calX\subseteq\calP(M)$ be the set of possible requests (i.e., lines in our case). In general, we can view an algorithm as a function $\calA\colon M\times\calX^*\to\calX$ with $\calA(P_0)=P_0$ and $\calA(P_0,X_1,\dots,X_t)\in X_t$ for each initial point $P_0\in M$ and requests $X_1,\dots,X_t\in\calX$. We call an algorithm \emph{memoryless} if $\calA(P_0,X_1,\dots,X_t)$ is a function of only the last position $\calA(P_0,X_1,\dots,X_{t-1})$, the last request $X_{t-1}$ and the new request $X_t$.

However, memorylessness alone would not impose any limit on the power of line-chasing algorithms: By perturbing its positions very slightly, an algorithm could always encode the entire history in low significant bits of its current position. To get a meaningful notion of memorylessness, we therefore require an additional property, namely that the algorithm is oblivious with respect to rotation, translation or scaling of the metric space. More precisely, a \emph{direct similarity} of $\reals^d$ is a bijection $f\colon\reals^d\to\reals^d$ that is a composition of rotation, translation and scaling by some factor $r_f>0$. In particular, for any $P,Q\in\reals^d$, we have $\distpar{f(P)}{f(Q)}=r_f\distpar{P}{Q}$.
We call an algorithm $\calA$ \emph{rts-oblivious} if $\calA(f(P_0),f(X_1),\dots,f(X_t))=f(\calA(P_0,X_1,\dots,X_t))$ for any $P_0\in M$, $X_i\in\calX$ and any direct similarity $f$. In general (when algorithms are allowed to use memory) there is no reason to behave differently when the input is transformed by such $f$, since it is just a renaming of points and scaling of distances by a uniform constant. For completeness, we provide a proof of this intuition via the following proposition:

\begin{proposition}\label{prop:rts-oblivious}
If there is a $c$-competitive algorithm for line-chasing, then there is a $c$-competitive rts-oblivious algorithm.
\end{proposition}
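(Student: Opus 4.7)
The plan is to construct $\calA'$ by normalizing every input via a canonical direct similarity and running $\calA$ on the normalized input. For each input $I = (P_0, X_1, \ldots, X_t)$, I would pick a direct similarity $\phi_I$ (depending only on the initial data $(P_0, X_1)$) mapping $(P_0, X_1)$ to a canonical representative of its orbit under the direct-similarity group, and set
\[
  \calA'(I) \;:=\; \phi_I^{-1}\bigl(\calA(\phi_I(P_0), \phi_I(X_1), \ldots, \phi_I(X_t))\bigr).
\]
Because $\phi_I$ does not change as the request sequence grows, the trajectory of $\calA'$, expressed in the normalized frame, coincides with $\calA$'s trajectory on the single coherent normalized input sequence $\phi_I(I_m)$.

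For rts-obliviousness, the key property is the equivariance identity $\phi_{gI} = \phi_I \circ g^{-1}$ for any direct similarity $g$: since $g$ leaves orbits invariant, $gI$ and $I$ share the same canonical representative, and $\phi_I \circ g^{-1}$ is a valid choice of normalizing similarity for $gI$. A short calculation then yields $\calA'(gI) = g \cdot \calA'(I)$. For the competitive ratio, I would use that a direct similarity with factor $r$ multiplies all Euclidean distances by $r$: the path length of $\calA'$ on $I$ equals $r_{\phi_I}^{-1}$ times the path length of $\calA$ on $\phi_I(I)$, which by $c$-competitiveness of $\calA$ is at most $c \cdot r_{\phi_I}^{-1} \cdot \OPT(\phi_I(I)) = c \cdot \OPT(I)$, since $\OPT$ scales by the same factor $r_{\phi_I}$ under $\phi_I$.

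The main obstacle will be making $\phi_I$ well-defined (and $\calA'(I)$ independent of the ambiguity) when the pair $(P_0, X_1)$ has a non-trivial stabilizer under the direct-similarity group; this occurs in dimensions $d \geq 4$ (due to rotations in the subspace orthogonal to the plane spanned by $P_0$ and $X_1$) and in the degenerate case $P_0 \in X_1$. I would handle it by first using the axiom of choice to pick one canonical $\phi_I$ per orbit, and then, if necessary, modifying $\calA$ on canonical representatives so that its output always lies in the stabilizer-fixed locus of $\hat{X}_t$. For the first move, a canonical choice that always works is the foot of the perpendicular from $\hat{P}_0$ to $\hat{X}_1$ (or $\hat{P}_0$ itself if $\hat{P}_0 \in \hat{X}_1$), which is a fixed point of the stabilizer acting on $\hat{X}_1$; an analogous stabilizer-invariant choice can be made for each subsequent request. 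After this modification, the composition $\phi_I^{-1}(\calA(\phi_I(I)))$ is independent of the particular $\phi_I$ chosen within the ambiguity, so $\calA'$ is well-defined, rts-oblivious, and $c$-competitive.
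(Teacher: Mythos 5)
Your proposal is correct and is essentially the paper's proof: the paper normalizes via the unique direct similarity sending $(P_0,X_1)$ to $\bigl((0,1),\,\reals\times\{0\}\bigr)$ (assuming WLOG $P_0\notin X_1$), uses the same equivariance identity $g_{f(P_0)f(X_1)}=g\circ f^{-1}$, and the same scaling argument for the competitive ratio. The only difference is your stabilizer discussion, which the paper sidesteps by working in the plane where the normalizing similarity is unique; note, though, that nontrivial stabilizers already occur in $d=3$ (e.g.\ the $180^{\circ}$ rotation about the perpendicular from $P_0$ to $X_1$), not only in $d\ge 4$.
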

\begin{proof}
	For an initial position $P_0$ and request sequence $X_1,\dots,X_t$, we assume without loss of generality that $P_0\notin X_1$. For any such $P_0$ and $X_1$, there exists a unique direct similarity $g=g_{P_0X_1}$ such that $g(P_0)=(0,1)$ and $g(X_1)=\reals\times\{0\}$. Given a $c$-competitive algorithm $\calA$, we claim that the algorithm $\tilde{\calA}$ given by \begin{align*}
	\tilde{\calA}(P_0,X_1,\dots,X_t)=g^{-1}(\calA(g(P_0),g(X_1),\dots,g(X_t)))
	\end{align*}
	is rts-oblivious and $c$-competitive.

To see that $\tilde{\calA}$ is rts-oblivious, consider an arbitrary direct similarity $f$. Notice that $g_{f(P_0)f(X_1)}=g\circ f^{-1}$. Thus,
\begin{align*}
\tilde{\calA}(f(P_0),f(X_1),\dots,f(X_t)) &= (f\circ g^{-1})(\calA(g(P_0),g(X_1),\dots,g(X_t)))\\
&= f(\tilde{\calA}(P_0,X_1,\dots,X_t)),
\end{align*}
as required. To see that $\tilde{\calA}$ is $c$-competitive, consider an initial position $P_0$ and request sequence $X_1,\dots,X_m$ along with an adversary's solution $A_0=P_0,A_1\in X_1,\dots,A_m\in X_m$. The cost of $\tilde{\calA}$ can be bounded via
\begin{align*}
\sum_{t=1}^m&\distpar{\tilde{\calA}(P_0,X_1,\dots,X_{t-1})}{\tilde{\calA}(P_0,X_1,\dots,X_{t})}\\
&= \frac{1}{r_{g}}\sum_{t=1}^m\distpar{{\calA}(g(P_0),g(X_1),\dots,g(X_{t-1}))}{{\calA}(g(P_0),g(X_1),\dots,g(X_{t}))}\\
&\le \frac{c}{r_{g}}\sum_{t=1}^m\distpar{g(A_{t-1})}{g(A_t)}\\
&= c\sum_{t=1}^m\distpar{A_{t-1}}{A_t},
\end{align*}
where the inequality uses that $\calA$ is $c$-competitive against the solution $g(A_0),\dots,g(A_m)$ for the transformed input $g(P_0),g(X_1),\dots,g(X_m)$.
\end{proof}

Intuitively, an rts-oblivious algorithm does not know the absolute coordinates of its positions and requests, but only relative to each other and up to scaling. If it is memoryless, in the plane this boils down to only knowing the angle between the new and the old request line. We show now that our algorithms \AlgDrift and \AlgExtDrift achieve the optimal competitive ratio among rts-oblivious memoryless algorithms.

\begin{theorem}
	Any rts-oblivious memoryless algorithm for line-chasing has competitive ratio at least $3$.
\end{theorem}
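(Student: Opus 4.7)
The plan is to reduce to the plane, characterize all rts-oblivious memoryless algorithms by a single scalar function, and then feed them adversarial sequences that directly force ratio~$3$.

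First, any rts-oblivious memoryless algorithm in $\reals^d$ restricts to one in the affine plane spanned by the current configuration $(P, L, L')$, so it suffices to prove the lower bound in $\reals^2$.  There, rts-obliviousness combined with memorylessness leaves only one scalar degree of freedom per angle: the algorithm is determined by a \emph{drift function} $\tau : (0,\pi) \to \reals$ such that, given $P \in L$ at distance $r$ from $S = L \cap L'$ and angle $\theta$ between $L$ and $L'$, the algorithm places $P'$ on $L'$ at signed distance $\tau(\theta)\,r$ from $S$, with sign canonically fixed by the direction of $\barP$ on $L'$ (parallel lines and the case $P \in L'$ require separate conventions).

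Next, I would analyse the \emph{spiral} adversary: start $P_0$ at unit distance from a point $O$ and request lines $L_1, L_2, \ldots$ all through $O$ with consecutive angle $\theta$.  Rts-invariance makes every step a rescaled copy of the first, so an induction shows that after step $k$ the algorithm sits on $L_k$ at distance $|\tau(\theta)|^k$ from $O$, and the $k$-th step costs $|\tau(\theta)|^{k-1}\sqrt{1+\tau(\theta)^2 - 2\tau(\theta)\cos\theta}$.  Summing the geometric series (assuming $|\tau(\theta)| < 1$, which is necessary for competitiveness) yields total algorithm cost $\sqrt{1+\tau(\theta)^2 - 2\tau(\theta)\cos\theta}/(1-|\tau(\theta)|)$, while the adversary achieves cost~$1$ by moving once to $O$.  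This already gives a lower bound of the supremum over $\theta$ of this quantity on the competitive ratio.

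To push the bound up to $3$ for every drift function I would combine the spiral with complementary instances tuned to the regime the spiral misses.  Algorithms with $\tau(\theta)$ close to $\cos\theta$ (timid drift) are caught by the spiral for that $\theta$; algorithms with $\tau(\theta)$ close to $0$ (aggressive drift, overshoot to $S$) are caught by a nearly-parallel ``offset'' line on which the adversary only moves a tiny perpendicular distance.  The intermediate regime, which includes the drift coefficient $c = 1/\sqrt 2$ realised by {\AlgDrift} (where the spiral alone only yields $\sqrt 3$), requires a multi-step construction.  My plan is to engineer an instance in which the adversary repeatedly moves opposite to the algorithm's drift --- essentially the ``Case~3 tight'' configuration of the upper bound proof --- so that each step realises the amortised equality with $3\cdot(\text{adversary cost})$, while the line sequence is chosen so that the algorithm--adversary separation stays bounded (preventing the potential change from absorbing the factor).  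The main obstacle will be precisely this design: ensuring that the per-step amortised tightness accumulates into a true competitive ratio of $3$ rather than being eroded by a linearly growing potential.
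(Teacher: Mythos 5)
Your framework matches the paper's: the reduction to a scalar drift function, the spiral around a common point to kill timid drifts, and a near-parallel request to kill overshooting drifts are all present in the paper's proof (there the algorithm is encoded as $\beta(h/s)=x/h$, which is an equivalent reparametrization of your $\tau(\theta)$, and your two extreme regimes correspond to the cases $\beta(0)\le 0$ and $\beta(0)=\infty$). You also correctly observe that for the critical drift coefficient $c=1/\sqrt2$ the spiral alone yields only $\sqrt{1+c^2}/c=\sqrt3$. But the heart of the theorem is precisely the intermediate regime $0<c<\infty$, and there your proposal stops at a plan with an explicitly acknowledged unresolved obstacle. That is a genuine gap: without the complementary construction and its quantitative analysis, no bound better than $\sqrt3$ is established.

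The missing construction is simpler than the alternating, amortization-matching sequence you envisage, and your worry about a linearly growing potential does not arise in it. The adversary issues one request (a near-parallel line at perpendicular distance $1$ and angle $a\to0$) on which the algorithm drifts a distance $\approx c$ to the \emph{left} toward the apparent intersection; then all subsequent lines form a spiral converging to a single point $S_2$ placed to the \emph{right} of the algorithm's new position at a distance $\delta$ that is a free parameter. Both players end at $S_2$, so there is no residual separation to erode the bound. The offline cost tends to $\sqrt{1+\delta^2}$ (one straight move from the start to $S_2$), while the algorithm pays $\sqrt{1+c^2}$ for the first step plus $\sqrt{1+c^2}\,(c+\delta)/c$ for the spiral (initial distance to $S_2$ is $c+\delta$, and the spiral's cost is $\sqrt{1+c^2}/c$ times that distance). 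Optimizing $\delta$ (the paper takes $\delta=\tfrac1{2c}$) gives a ratio of $\sqrt{4c^2+1/c^2+5}$, which is minimized at $c=1/\sqrt2$ with value exactly $3$. Carrying out this computation, together with a careful limiting argument defining $c=\limsup_{a\to0}\beta(a)$ and handling the degenerate cases, is what your proposal still needs to become a proof.
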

\begin{proof}
We will construct an initial point $P_0$ and lines $L_0,\dots,L_m$ in $\reals^2$ with the property that $P_0\in L_0$ and $L_t$ can be obtained by rotating $L_{t-1}$ around some point $S_t\in L_{t-1}$ in clockwise direction by less than $90$ degrees.

Let $P_0,\dots,P_m$ be the sequence of points visited by a given algorithm. We use notation similar to that in \cref{fig: algorithm drift}: Write $\barP_{t-1}$ for the orthogonal projection of $P_{t-1}$ onto $L_t$ and let $h_t=\distpar{P_{t-1}}{\barP_{t-1}}$ and $s_t=\distpar{\barP_{t-1}}{S_t}$. The movement from $P_{t-1}$ to $P_t$ can always be viewed as first moving to $\barP_{t-1}$ and then moving some distance $x_t\in\reals$ in the direction towards intersection $S_t$, for a total cost $\sqrt{h_t^2+x_t^2}$. Here, $x_t<0$ would constitute movement away from $S_t$ and $x_t>s_t$ would constitute movement beyond $S_t$.

Observe that for rts-oblivious memoryless algorithms, $\frac{x_t}{h_t}$ is a function of only $\frac{h_t}{s_t}$, i.e. $\beta(\frac{h_t}{s_t})=\frac{x_t}{h_t}$ for some function $\beta\colon(0,\infty)\to\reals$. Any rts-oblivious memoryless algorithm for line-chasing in the plane is uniquely determined by its associated function $\beta$ as well as similar functions for the cases of counter-clockwise rotations of at most 90 degrees and parallel lines.\footnote{If we require algorithms to be oblivious also with respect to reflection (which would still satisfy \cref{prop:rts-oblivious}), they would be uniquely determined by $\beta$ alone. \AlgDrift is the algorithm corresponding to $\beta(a)=\frac{a+1-\sqrt{a^2+1}}{\sqrt{2}a}$.} Let $\beta(0):=\limsup_{a\to0}\beta(a)\in\reals\cup\{-\infty,\infty\}$. Let us first show that algorithms with $\beta(0)=\infty$ or $\beta(0)\le 0$ have unbounded competitive ratio.

If $\beta(0)=\infty$, we choose $P_0=(1,h)$, $L_0=\{(x,y)\colon y=hx\}$, $L_1=\reals\times\{0\}$ for some small $h>0$. The algorithm's cost is $h\sqrt{1+\beta(h)^2}$, whereas the optimal cost is $h$. Choosing $h$ arbitrarily small shows that the competitive ratio is unbounded.

If $\beta(0)\le0$, fix some $\epsilon\in(0,\frac{1}{2}]$ and choose $a\in(0,\epsilon]$ with $\beta(a)\le\epsilon$. Let $P_0=(1,0)$, $L_0=\reals\times\{0\}$ and define $L_t$ as the clockwise rotation of $L_{t-1}$ around the origin $O:=(0,0)$ by angle $\arctan(a)$. Thus, we have $\frac{h_t}{s_t}=a$ for each $t$. Notice that $s_t\sqrt{1+a^2}=\distpar{O}{P_{t-1}}=s_{t-1}(1-a\beta(a))$, and therefore
\begin{align*}
\frac{s_t}{s_{t-1}} = \frac{1-a\beta(a)}{\sqrt{1+a^2}}\ge 1-\frac{a^2+a\beta(a)}{1+a^2} \ge 1-2\epsilon a,
\end{align*}
where the first inequality uses $\sqrt{1+a^2}\le 1+a^2$ and the second inequality uses $0<a\le\epsilon$ and $\beta(a)\le \epsilon$. Hence,
\begin{align*}
s_t\ge (1-2\epsilon a)^{t-1}
\end{align*}
Since $\distpar{P_{t-1}}{P_t}\ge h_t=as_t$, the total cost of the algorithm is
\begin{align*}
\sum_{t=1}^m\distpar{P_{t-1}}{P_t} \ge a\sum_{t=0}^{m-1} (1-2\epsilon a)^t \xrightarrow{m\to\infty}\frac{1}{2\epsilon}.
\end{align*}
Meanwhile, an optimal algorithm pays total cost $1$ by moving to $O$ immediately. Letting $\epsilon\to0$, we find again that the competitive ratio is unbounded.

It remains to consider the case $0<\beta(0)<\infty$. Then we can choose arbitrarily small $a>0$ such that $0<a\beta(a)<1$. We choose the initial point $P_0=(\frac{1}{a},1)$, and the request sequence starts with $L_0=\{(x,y)\colon y=ax\}$ and $L_1=\reals\times \{0\}$. For $t\ge 2$, we define $L_t$ as the clockwise rotation of $L_{t-1}$ around $S_t=S_2=\left(\frac{1}{a}-\beta(a)+\sqrt{1+a^2}\left(\beta(a)+\frac{1}{2\beta(a)}\right),0\right)$ by angle $\arctan(a)$. The idea is that in response to $L_1$, the algorithm drifts to the left (towards intersection $S_1=(0,0)$), but the subsequent requests are such that it would have been cheaper to drift to the right (away from $S_1$) instead.

We have $s_1=\frac{1}{a}$ and $s_2=\frac{\distpar{P_1}{S_2}}{\sqrt{1+a^2}}=\beta(a)+\frac{1}{2\beta(a)}$. For $t\ge 3$, similarly to the previous case we get
\begin{align*}
\frac{s_t}{s_{t-1}}\ge 1-\frac{a^2+a\beta(a)}{1+a^2} \ge 1-a^2-a\beta(a)
\end{align*}
and therefore
\begin{align*}
s_t\ge\left(\beta(a)+\frac{1}{2\beta(a)}\right)\left(1-a^2-a\beta(a)\right)^{t-2}\qquad\text{if }t\ge 2.
\end{align*}
As $m\to\infty$, the cost of the algorithm is
\begin{align*}
\sum_{t=1}^\infty\distpar{P_{t-1}}{P_t} &= \sum_{t=1}^\infty h_t\sqrt{1+\beta(a)^2} = \sqrt{1+\beta(a)^2}a\sum_{t=1}^\infty s_t\\
&\ge \sqrt{1+\beta(a)^2}\left(1+\left(\beta(a)+\frac{1}{2\beta(a)}\right)\frac{1}{a+\beta(a)}\right)\\
&\xrightarrow{a\to0}\sqrt{1+\beta(0)^2}\left(2+\frac{1}{2\beta(0)^2}\right),
\end{align*}
where the limit $a\to0$ is taken along a sequence where $\beta(a)\to \beta(0)$.
In contrast, an offline algorithm can move immediately from $P_0$ to $S_2$, paying cost $\sqrt{1+\frac{1}{4\beta(0)^2}}$ as $a\to 0$ and $\beta(a)\to\beta(0)$. By dividing, we see that the competitive ratio is at least
\begin{align*}
\sqrt{(1+\beta(0)^2)\left(4+\frac{1}{\beta(0)^2}\right)}=\sqrt{4\beta(0)^2+\frac{1}{\beta(0)^2}+5},
\end{align*}
which is minimized for $\beta(0)=\frac{1}{\sqrt{2}}$, taking value $3$.
\end{proof}

%%%%%%%%%%%%%%%%%%%%%%%%%%%%%%%%%%%%%%%%%%%%%%%%%%%%%%%%%%%%%%%%%%%%%%%%%%%%%%
%%%%%%%%%%%%%%%%%%%%%%%%%%%%%%%%%%%%%%%%%%%%%%%%%%%%%%%%%%%%%%%%%%%%%%%%%%%%%%

\section{Lower Bound for Arbitrary Algorithms}
\label{sec: lower bound}

Finally, in this section, we show how to improve an existing lower bound 
of $\sqrt{2} \approx 1.41$ for arbitrary algorithms to $1.5358$. Our bound holds even in two dimensions, and improves also the lower bound for the more general convex body chasing in two dimensions.

\begin{theorem}
The competitive ratio of any deterministic online algorithm \ALG for the line 
chasing problem is at least $1.5358$. 
\end{theorem}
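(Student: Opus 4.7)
The plan is to exhibit a small, carefully parameterized family of request sequences in the plane such that for every deterministic online algorithm, at least one sequence in the family forces a competitive ratio of at least $1.5358$. Since we are bounding deterministic algorithms from below, we may take the adversary to be adaptive: after the algorithm commits to a point $P_t$ on the current line, the adversary chooses the next line $L_{t+1}$ as a function of $P_t$. The competitive ratio is then the ratio of the algorithm's total path length to the minimum, over a finite menu of offline strategies, of the offline cost.

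Concretely, I would start from an initial point $P_0$ (say the origin) and request a first line $L_1$ at unit distance from $P_0$. The algorithm must pick $P_1 \in L_1$; parameterize its choice by the signed distance $u$ along $L_1$ from the foot of the perpendicular from $P_0$. The adversary then branches: depending on whether $u$ is ``small'' or ``large'', and on which side of $0$ it lies, the adversary issues a second line $L_2$ (and possibly a third line $L_3$ or an alternating tail of requests on $L_1$ and $L_2$) chosen from a precomputed list. For each branch I would compute (i)~a lower bound on the algorithm's forced cost as a function of $u$ and the algorithm's subsequent moves, and (ii)~an upper bound on the offline cost obtained by a direct move to a well-chosen anchor point (for example an intersection $L_1 \cap L_2$, or a point near which many of the requests pass). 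The bound then follows by showing that no value of $u$ can be simultaneously good against all branches.

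The technical heart of the argument is an optimization. Two-line constructions alone give only the Friedman--Linial bound of $\sqrt{2}$, so the construction must involve at least three lines, or an oscillating tail exploiting the fact that once the algorithm has drifted away from the intersection point of $L_1$ and $L_2$, the adversary can keep alternating requests and charge a large geometric-series cost while the offline optimum has already paid essentially nothing beyond reaching the intersection. I expect the parameters -- the angles between $L_1$, $L_2$, and (possibly) $L_3$, together with the cutoff between the ``small $u$'' and ``large $u$'' branches -- to be chosen so that the algorithm's cost as a function of $u$ is a piecewise convex function, minimized at a single value $u^\star$ where all branches give the same ratio; that common value will turn out to be $1.5358$.

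The main obstacle is this balancing step: writing the per-branch ratios explicitly in terms of the angle and the algorithm's choice, and then tuning the construction so that the worst-case minimum over $u$ of the maximum over branches is provably at least $1.5358$. Once the correct family is identified, the remaining work is a routine (though delicate) trigonometric calculation and a case analysis over whether the algorithm's $P_1$ lies to the ``adversary side'' or the ``escape side'' of the critical point. No more than elementary plane geometry and calculus should be required.
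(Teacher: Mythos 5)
Your high-level plan is the right one and matches the paper's strategy in outline: an adaptive adversary in the plane, a construction with three lines (you correctly observe that two lines only yield $\sqrt{2}$), branching on where the algorithm lands, and for each branch comparing a forced lower bound on the algorithm's path against an explicit offline move to a single anchor point, with parameters tuned so the branches balance. The device you describe for punishing a ``bad'' position --- an alternating or rotating tail of requests through a fixed point that charges the algorithm an arbitrarily large cost while the offline player has already arrived there --- is exactly how the paper forces the algorithm to a designated point ($A_1$, $A_2$, or $A_3$, all chosen on the relevant line, with the three request lines concurrent at $A_3$).

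However, as written this is a proof plan, not a proof: every quantitative step is deferred (``I would compute'', ``I expect the parameters \dots to be chosen so that''). For this particular theorem the entire mathematical content \emph{is} the explicit construction and the numerical verification --- the choice of the three lines, the cutoff points separating the branches, the closed-form expressions for the algorithm's forced cost and the offline cost in each branch, and the check that each ratio is at least $1.5358$. None of that is supplied, and the value $1.5358$ is not derived from anything; it is simply asserted as the outcome of an optimization you have not performed. Two further structural points you would need to make precise: (i) the reduction of the algorithm's first move to a single canonical point (the paper does this by a mirror-symmetry argument rather than by carrying a free parameter $u$ through all branches, which keeps the case analysis finite); and (ii) a ``forbidden region'' mechanism --- once the algorithm is punished for entering a region after the second request, all later cases may assume its path avoids that region, and this assumption is what makes the third-line cost bounds provable. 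Without these ingredients filled in, the argument establishes nothing beyond the qualitative claim that some three-line construction might beat $\sqrt{2}$.
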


\begin{proof}
We describe our adversarial strategy below. On the created input, we will compare 
the cost of \ALG to the cost of an offline optimum $\OPT$. We
assume that both \ALG and \OPT start at origin point $P_0 = A_0 = (0,0)$.

Our construction is parameterized with real positive numbers 
$c_1 = 0.5535$, 
$c_2 = 0.4965$, 
$c_3 = 0.8743$, 
$a_1 = 1.3012$, 
$a_2 = 0.6663$, 
$p_2 = 0.5612$, and 
$p_3 = 0.1696$.

We fix points $P_1 = (0,c_1)$, $C_2 = (0,c_1+c_2)$, $C_3 = (0,c_1+c_2+c_3)$ and $A_3 = (1,c_1)$,
see \cref{fig:improved_lb} for illustration.
For succinctness, we use notation $\ltwodist{x}{y} = \sqrt{x^2 + y^2}$.

\begin{figure}[t]
\centering
\includegraphics[width = 2.8in]{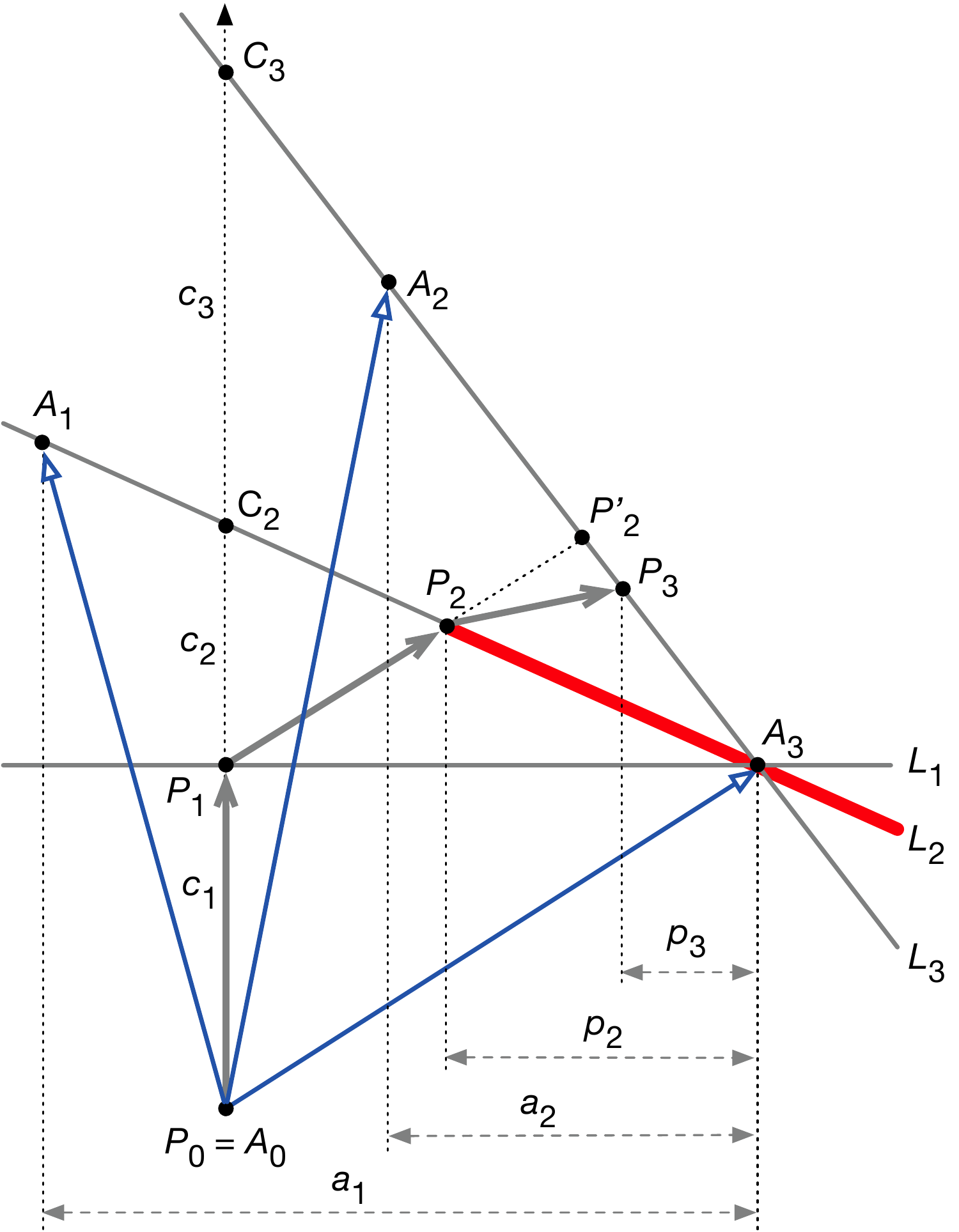}
\caption{Visual description of our lower bound for arbitrary algorithms. Lines
$L_1$, $L_2$ and $L_3$ are presented to an~online algorithm. Blue arrows
describe possible movements of \OPT, while gray thick arrows describe a~path
of an~algorithm that minimizes the competitive ratio for this adversarial
construction. Red thick half-line denotes the forbidden region.}
\label{fig:improved_lb}
\end{figure}

\paragraph*{Initial part: Line $L_1$}

The first request line is the line $P_1 A_3$, denoted $L_1$. Without loss of
generality, we can assume that \ALG moves to point $P_1$. This is because
the adversary can either play the strategy described below or its mirror
image (flipped against the line $P_0P_1$), so any deviation from $P_1$, either to
the left or right, can only increase the cost of \ALG. 

From now on, for any point $Q$ we denote its projection on line $L_1$ by
$\proj{Q}$.

\paragraph*{Middle part: Line $L_2$} 

Next, the adversary issues the request line $C_2 A_3$, denoted~$L_2$. 
Let $P_2 \in L_2$ and $A_1 \in L_2$ be the points to the left of $A_3$, such that 
$\distsub{\proj{P_2}}{A_3} = p_2$ and $\distsub{\proj{A_1}}{A_3} = a_1$.

Let $\bar{P_2}$ be the point on $L_2$ chosen by \ALG. If $\bar{P_2}$ lies to the right 
of point $P_2$, then the adversary forces \ALG to move to $A_1$ (by giving
sufficiently many different lines that go through $A_1$ at different angles).
\OPT may then serve the whole sequence by going from $A_0$ to $A_1$ at cost 
\begin{align*}
	\distsub{A_0}{A_1}
		= &\; \ltwodist{c_1+c_2 \cdot a_1}{a_1-1} \leq 1.23679
\intertext{while the cost of \ALG is then at least}
	\distsub{P_0}{P_1} + \distsub{P_1}{P_2} + \distsub{P_2}{A_1} 
		= &\; \distsub{P_0}{P_1} + \distsub{P_1}{P_2} + \distsub{A_3}{A_1} - \distsub{A_3}{P_2} \\
		= &\; c_1 + \ltwodist{1-p_2}{c_2\cdot p_2} 
		+ \ltwodist{a_1}{c_2 \cdot a_1} - \ltwodist{p_2}{c_2 \cdot p_2} \\
		\geq &\; 1.89948
\end{align*}
Hence, the competitive ratio in this case is at least $1.5358$. 

We call the half-line of $L_2$ to the right of point $P_2$ \emph{forbidden
region}. From now on, we assume that the point chosen by \ALG in $L_2$
does not lie in this region.

\paragraph*{Final part: Line $L_3$} 

Finally, the adversary issues the request line $C_3 A_3$, denoted~$L_3$. 
Let $P'_2$ be the intersection of line $P_1 P_2$ with line $L_3$. 
Next, let $A_2$ and $P_3$ be the points on the line $L_3$ to the left of $A_3$, such that
$\distsub{\proj{A_2}}{A_3} = a_2$ and $\distsub{\proj{P_3}}{A_3} = p_3$. Note that $P_3$ belongs to the
interval $P'_2A_3$.

Let $\bar{P_3}$ be the point on $L_3$ chosen by \ALG. 
We consider two cases. 

\begin{description}

\item[Case 1.]
$\bar{P_3}$ lies at point $P_3$ or to its left. In this case, the adversary forces \ALG to
move to~$A_3$. \OPT may serve the whole sequence by going from $A_0$ to $A_3$
paying
\[
	\distsub{A_0}{A_3} = \ltwodist{1}{c_1} \leq 1.142963.
\]
We may now argue that the cost of \ALG is minimized if $\bar{P_3}$ is equal to $P_3$:
If $\bar{P_3}$ is to the left of point $P'_2$, then the cost 
of \ALG is at least $\distsub{P_0}{P_1} + \distsub{P_1}{\bar{P_3}} + \distsub{\bar{P_3}}{A_3}$. Both the second and the third
summand decrease when we move $\bar{P_3}$ towards $P'_2$. Hence, now we may assume that 
$\bar{P_3}$ belongs to the interval $P'_2P_3$. 
As the path of \ALG must avoid forbidden region, its cost is at least 
$\distsub{P_0}{P_1} + \distsub{P_1}{P_2} + \distsub{P_2}{\bar{P_3}} + \distsub{\bar{P_3}}{A_3}$. 
The sum of the last two summands decreases when we move $\bar{P_3}$ towards $P_3$. 
Therefore, we obtain that the cost of \ALG is at least 
\begin{align*}
	\distsub{P_0}{P_1} + & \distsub{P_1}{P_2} + \distsub{P_2}{P_3} + \distsub{P_3}{A_3} \\
		= &\; c_1 + \ltwodist{1-p_2}{c_2 \cdot p_2} 
			+ \ltwodist{(c_2+c_3) \cdot p_3 - c_2 \cdot p_2}{p_2 - p_3} \\
			& \quad\quad + \ltwodist{p_3}{(c_2+c_3) \cdot p_3} 
		\geq 1.75537.
\end{align*}
Thus, in this case the competitive ratio is at least $1.5358$.

\item[Case 2.]
If $\bar{P_3}$ lies to the right of point $P_3$, then the adversary forces \ALG to
move to $A_2$. \OPT may serve the whole sequence by going from $A_0$ to $A_2$
at cost
\[
	\distsub{A_0}{A_2} = \ltwodist{c_1+(c_2+c_3) \cdot a_2}{1-a_2} \leq 1.50435.
\]
To go from $P_1$ to $\bar{P_3}$ and avoid the forbidden region, \ALG
has to pay at least $\distsub{P_1}{P_2} + \distsub{P_2}{\bar{P_3}}$. 
Therefore, its cost is at least 
\begin{align*}
	\distsub{P_0}{P_1} + & \distsub{P_1}{P_2} + \distsub{P_2}{\bar{P_3}} + \distsub{\bar{P_3}}{A_2}  \\
		\geq &\; \distsub{P_0}{P_1} + \distsub{P_1}{P_2} + \distsub{P_2}{P_3} + \distsub{P_3}{A_2} \\
		\geq &\; \distsub{P_0}{P_1} + \distsub{P_1}{P_2} + \distsub{P_2}{P_3} + \distsub{A_2}{A_3} - \distsub{P_3}{A_3} \\
		= &\; c_1 + \ltwodist{1-p_2}{c_2 \cdot p_2} 
			+ \ltwodist{(c_2+c_3) \cdot p_3 - c_2 \cdot p_2}{p_2 - p_3} \\
			& \quad \quad + \ltwodist{a_2}{(c_2+c_3) \cdot a_2} 
			- \ltwodist{p_3}{(c_2+c_3) \cdot p_3} 
			\geq 2.31039.
\end{align*}
Thus, in this case the ratio is also at least $1.5358$.
\qedhere
\end{description}
\end{proof}

%%%%%%%%%%%%%%%%%%%%%%%%%%%%%%%%%%%%%%%%%%%%%%%%%%%%%%%%%%%%%%%%%%%%%%%%%%%%%%
%%%%%%%%%%%%%%%%%%%%%%%%%%%%%%%%%%%%%%%%%%%%%%%%%%%%%%%%%%%%%%%%%%%%%%%%%%%%%%

\section{Final Comments}
\label{sec: final comments}

Establishing the optimal competitive ratio for line chasing with memory remains an open
problem. We believe that with memory, a competitive ratio better than $3$ is achievable.

The intuition is that in the first move,
if $L$ and $P$ are the initial
line and position and $L'$ is the new request line, then the 
algorithm should move to the nearest point $\barP$ on $L'$. More generally,
if the requests on $L$ and $L'$ alternate (and their angle is small), the
algorithm should initially drift slowly towards $S = L\cap L'$ and 
only gradually accelerate as it becomes more credible that the adversary is located at $S$. To gauge this credibility for general request sequences, an algorithm might store the current work function at each step.

It appears also that our lower bound of $1.5358$ can be improved  by introducing additional steps, 
although this gives only very small improvements and leads to a very involved analysis. 
It is possible that an approach fundamentally different from ours may give a
better bound with simpler analysis.

%%%%%%%%%%%%%%%%%%%%%%%%%%%%%%%%%%%%%%%%%%%%%%%%%%%%%%%%%%%%%%%%%%%%%%%%%%%%%%
%%%%%%%%%%%%%%%%%%%%%%%%%%%%%%%%%%%%%%%%%%%%%%%%%%%%%%%%%%%%%%%%%%%%%%%%%%%%%%

\bibliographystyle{plainurl}
\bibliography{convex_body_chasing}

%%%%%%%%%%%%%%%%%%%%%%%%%%%%%%%%%%%%%%%%%%%%%%%%%%%%%%%%%%%%%%%%%%%%%%%%%%%%%%

\end{document}